 \newtheorem{thm}{Theorem}[section]
 \newtheorem{cor}[thm]{Corollary}
 \newtheorem{lem}[thm]{Lemma}
 \theoremstyle{definition}
 \newtheorem{defn}[thm]{Definition}
 \newtheorem{rem}[thm]{Remark}
 \numberwithin{equation}{subsection}
 \newcommand{\Gd}{\Gamma_{\!0}}
 \newcommand{\Gn}{\Gamma_{\!1}}
 \newcommand{\Real}{\mathbb{R}}
 \newcommand{\Complex}{\mathbb{C}}
\newcommand{\Dom}{\mathcal{D}}
\newcommand{\Ran}{\mathcal{R}}
\newcommand{\Ker}{\mathrm{Ker}}
\newcommand{\bd}{{\beta_0}}
\newcommand{\bn}{{\beta_1}}
 \newcommand{\D}{{D}}
 \newcommand{\Disc}{\mathbb{D}}
 \newcommand{\Torus}{\mathbb{T}}
 \newcommand{\bD}{{\partial\D}}
 \newcommand{\G}{{\Pi}} 
\newcommand{\Ad}{A_0}
\begin{document}

\title{A Note on Operator-Theoretic Approach \\
 to Classic Boundary Value Problems \\
  for Harmonic and Analytic Functions \\
  in the Complex Plane Domain}

\author{Vladimir Ryzhov}




%
%

\date{August 30, 2009}


\maketitle

\begin{abstract}
 General
 spectral boundary value problems
 framework
 is utilized to restate
  Poincar\'e, Hilbert,  and Riemann problems
 for harmonic and analytic functions
 in the abstract operator-theoretic setting.
\end{abstract}


\section*{Introduction}

The last several years have witnessed increased
 interest revealed by the mathematical community
  to the abstract operator-theoretic methods
   in applications to spectral boundary value problems
    for differential
     operators and operator matrices.
 It is sufficient to point out
  numerous recently published
   works~\cite{AlpBe,AmP, AGMT,AGMST, BehrLan, BGW,
   BHMNW,BM,BMNW, GM1, GM2, GM3, GMZ1, GMZ2, Posil3, Ryz1, Ryz2, Ryz3}
    along with their extensive bibliographies in order
     to appreciate
      the potential and vitality of emerging concepts and approaches.
 The general theory has been successfully applied to
  boundary value problems
  for general elliptic partial differential operators
   of even order in bounded Lipschitz domains,
   for nonselfadjoint $(2\times 2)$-block operator matrices
   acting in $L^2(0,1)\times L^2(0,1)$
   known as Hain-L\"ust operators,
   for additive perturbations of
   multiplication operators
   and some other cases inspired by the theory
   of elliptic partial differential operators.
%


%
The presented paper is an attempt to
 embrace the study of
  boundary value problems of complex analysis
   by the general operator theoretic framework.
 We follow the line of reasoning
   developed in~\cite{Ryz1,Ryz2, Ryz3} and
    hope to demonstrate utility of the abstract technique
    in formulating classic problems of Poincar\'e,
     Hilbert, and Riemann  for harmonic and analytic
      functions in the bounded simply connected and
       sufficiently smooth
        domain of the complex plane.
 Keeping this goal in mind
  no attempt is made to report
   any function analytic results
    on solvability and
     properties of solutions of these problems.
 For the comprehensive treatment (at least in
  the classical settings)
   the interested reader is referred to the
    authoritative
     resources~\cite{Beg, Ga, Mikh, Mus},
      where all the details can be found.
%


%
The paper consists of two sections.
 After recollecting relevant definitions and statements
  from~\cite{Ryz1,Ryz2, Ryz3}
   we apply obtained results to the Laplace operator
    on the plane domain.
Then by appropriate choice of
 boundary conditions we arrive
  at the standard statements of three
   aforementioned problems of complex analysis.
%


As usual, $\Real$,  $\Complex$ are the sets of real
 and complex numbers.
For two separable Hilbert spaces $H_1$, $H_2$ and
 linear operator~$A$
 from $H_1$  to $H_2$ the notation $A : H_1\to H_2 $ means
 that $A$ is defined everywhere in $H_1$ and bounded.
 Domain, range, and kernel of $A$ are $\D(A)$, $\Ran(A)$,
 and $\ker (A)$, respectively.
 The writing~$A : f\mapsto g$ for $f \in \Dom(A)$
 is equivalent to $Af = g$.
 The symbol~$\rho(A)$ is used for the resolvent
  set of~$A$.
If  $A : H \to H$ and  $\lambda \in \Complex$, then
  the inclusion~$\lambda^{-1}\in \rho(A)$
 means that the operator~$I - \lambda A$ is boundedly invertible,
 i.~e.  the inverse~$(I - \lambda A)^{-1}$ exists
 and is bounded in $H$.
 When discussing function theoretic concepts,
  the Lebesgue measure is assumed.




\section{Spectral Boundary Value Problems}

\subsection{Spaces and operators}
  Let $H$ be a Hilbert space and $T : H \to H$ be a bounded
linear operator.
Assume $\ker(T) = \{0\}$ and denote $\Ad$ the left inverse of $T$
 so that
\[
 \Ad T f = f, \quad f \in H
\]
Note that~$\Ad$ with domain~$\Dom(\Ad) = \Ran(T)$ need not be
 bounded, closed or even densely defined.
Let~$E$ be another Hilbert space and
 $\G : E \to H $ be a linear mapping with $\Ker(\G) = \{0\}$
 satisfying condition
\[
 \Ran(T) \cap \Ran(\G) = \{0\}
\]
 It follows that the linear set $\Ran(T) + \Ran(\G)$ is the
 direct sum~$\Ran(T) \dot + \Ran(\G)$.
Introduce linear operator~$A$ in $H$ with the domain $\Dom(A)
 := \Ran(T)\dot + \Ran(\G)$ by
 \[
  A : T f + \G \varphi \mapsto f, \qquad f \in H,\, \varphi \in E
\]
Obviously,
\[
 \Ker(A) = \Ran(\G), \qquad \Ran(A) = H, \qquad \Ad = A|_{\Ran(T) }
\]
Since $\Ker(\G) = \{0\}$ there exists the left inverse $\gamma_0$
of~$\G$ such that $\Ker(\gamma_0) = \{0\}$ and
\[
 \gamma_0 \G \varphi = \varphi, \qquad \varphi \in E
\]
We extend the operator~$\gamma_0$ from its domain~$\Dom(\gamma_0) =
\Ran(\G)$ to the linear map~$\Gd$ defined on $\Dom(A)$ by
\[
\Gd : T f + \G \varphi \mapsto \varphi, \qquad f \in H,\, \varphi
\in E
\]
It is clear that
\[
\Ker(\Gd)  = \Ran(T), \qquad \Ran(\Gd) = E,\qquad \gamma_0 =
\Gd|_{\Ran(\G)}
\]

\subsection{Spectral boundary value problem}
The spectral boundary value problem for unknown~$u \in \Dom(A)$ is
defined by the system of two equations
 \begin{equation}\label{Eq:SBVP}
 \left\{
 \begin{aligned}
  &&(A - \lambda I) u = f \\
  && \Gd u = \varphi
 \end{aligned}
 \right.
 \end{equation}
where $f\in H$, $\varphi \in E$ and $\lambda \in \Complex$ is the
spectral parameter.
Since $T : H \to H$, the  bounded inverse $(I - \lambda T)^{-1}$
 exists for any $\lambda$ in a small neighborhood of $\lambda = 0$.
To justify the terminology we note that in the applications
 below the first equation~(\ref{Eq:SBVP})
 is realized as the ``main'' equation for the operator~$A$
  defined in a bounded domain of the complex plane,
   whereas equality~$\Gd u = \varphi$
    plays the role of boundary condition.
 The operator~$\Gd$ is interpreted as a ``boundary map''
  defined on $\Dom(A)$ with values
  in the ``boundary space''~$E$.


 \begin{lem}\label{Lem:KerA-lambda}
 Suppose $\lambda^{-1}
    \in \rho(T)$. Then
  \[
    \Ker(A -\lambda I) = \Ran (( I - \lambda T)^{-1}\G)
  \]
\end{lem}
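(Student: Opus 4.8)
The plan is to establish the two set inclusions separately, in each case exploiting the defining splitting $\Dom(A) = \Ran(T) \dot{+} \Ran(\G)$ together with the action $A : Tf + \G\varphi \mapsto f$. The hypothesis $\lambda^{-1} \in \rho(T)$ enters only through the bounded invertibility of $I - \lambda T$, so the whole argument reduces to algebra on the decomposition.

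First I would take $u \in \Ker(A - \lambda I)$. Since $u \in \Dom(A)$, it can be written uniquely as $u = Tf + \G\varphi$ with $f \in H$ and $\varphi \in E$; uniqueness follows from $\Ker(T) = \{0\}$, $\Ker(\G) = \{0\}$ and $\Ran(T) \cap \Ran(\G) = \{0\}$. By definition $Au = f$, so the eigenvalue equation $Au = \lambda u$ reads $f = \lambda(Tf + \G\varphi)$, that is, $(I - \lambda T)f = \lambda \G\varphi$. Inverting $I - \lambda T$ gives $f = \lambda(I - \lambda T)^{-1}\G\varphi$, and substituting back into $u = Tf + \G\varphi$ and simplifying via the identity $\lambda T(I - \lambda T)^{-1} = (I - \lambda T)^{-1} - I$ collapses the expression to $u = (I - \lambda T)^{-1}\G\varphi \in \Ran((I - \lambda T)^{-1}\G)$. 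For the reverse inclusion I would start from an arbitrary $u = (I - \lambda T)^{-1}\G\varphi$ and first confirm $u \in \Dom(A)$: applying $I - \lambda T$ yields $(I - \lambda T)u = \G\varphi$, hence $u = T(\lambda u) + \G\varphi$, which exhibits $u$ as an element of $\Ran(T) \dot{+} \Ran(\G) = \Dom(A)$. The definition of $A$ on this very decomposition then gives $Au = \lambda u$, so $(A - \lambda I)u = 0$.

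The computations are elementary, so the only point requiring care is the bookkeeping of the decomposition: one must confirm that the candidate $u$ genuinely lies in $\Dom(A)$ before applying $A$, since $A$ acts solely through the splitting $Tf + \G\varphi$. The rewriting $u = T(\lambda u) + \G\varphi$ is precisely what licenses the rule $A : Tf + \G\varphi \mapsto f$ in the reverse direction, and recognizing the same resolvent identity $\lambda T(I - \lambda T)^{-1} = (I - \lambda T)^{-1} - I$ in both directions is the crux. Once that is in place, both inclusions follow immediately.
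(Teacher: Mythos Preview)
Your argument is correct and essentially identical to the paper's: the forward inclusion is verbatim the paper's computation, and for the reverse inclusion the paper derives the operator identity $(A-\lambda I)(I-\lambda T)^{-1}=A$ via the same resolvent expansion $I+\lambda T(I-\lambda T)^{-1}$, whereas you read off the decomposition $u=T(\lambda u)+\G\varphi$ directly---these are the same computation in slightly different dress.
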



\begin{proof}
 Let $u\in \Dom(A)$.
 Since $u\in \Dom(A) = \Ran(T) \dot + \Ran(\G)$
 there exist $f\in H$ and $\varphi \in E$ such that
 $u = T f + \G \varphi$.
 Then
 \[
 (A - \lambda I) u = (A - \lambda I)(T f + \G \varphi)
 = f - \lambda ( T f + \G \varphi) = (I - \lambda T) f- \lambda \G
 \varphi
 \]
 Assuming $(A - \lambda I) u = 0 $ and $\lambda^{-1} \in \rho(T)$
 we obtain
 $ f =  \lambda (I - \lambda T)^{-1}\G \varphi $.
 Substitution into~$u = T f + \G \varphi$ yields
 \[
   u =  \lambda T (I - \lambda T)^{-1}\G
   \varphi + \G \varphi = [I  + \lambda T (I - \lambda
   T)^{-1}]\G\varphi
   =
(I - \lambda T)^{-1}\G\varphi
 \]
To prove the inverse, put  $v = (I - \lambda T)^{-1}\G\varphi$ with
some $\varphi \in E$ and observe that
\[
 \begin{aligned}
 (A- \lambda I)(I - \lambda T)^{-1} & =
 (A - \lambda I) [I + \lambda T (I - \lambda T)^{-1}]
 \\
 & = A - \lambda I + \lambda  (A - \lambda I) T (I - \lambda T)^{-1}
  \\
  & = A - \lambda I +  \lambda (I - \lambda T) (I - \lambda T)^{-1}
  = A
 \end{aligned}
\]
Since $\Ker(A) = \Ran(\G)$,
\[
 (A - \lambda I) v = (A- \lambda I)(I - \lambda T)^{-1}\G\varphi
  =  A\G
  \varphi = 0
\]
which completes the proof.
\end{proof}

The following theorem  describes solutions of~(\ref{Eq:SBVP})
when $\lambda^{-1}\in \rho(T)$ .

\begin{thm}\label{Th:BVPTheorem1}
 If $\lambda^{-1} \in \rho (T)$, then
 the problem~(\ref{Eq:SBVP})
 is uniquely solvable for any
 $f \in H$, $\varphi \in E$
 with the solution
 \begin{equation}\label{Eq:Soultion}
  u_\lambda^{f,\varphi} = T(I - \lambda T)^{-1} f + (I - \lambda
  T)^{-1}\G \varphi
 \end{equation}
\end{thm}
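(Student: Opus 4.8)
The plan is to verify directly that the candidate solution formula~(\ref{Eq:Soultion}) satisfies both equations of the system~(\ref{Eq:SBVP}), and then to establish uniqueness separately. First I would address existence by splitting the verification according to the two terms in $u_\lambda^{f,\varphi}$. The second term $(I-\lambda T)^{-1}\G\varphi$ is exactly the element analyzed in Lemma~\ref{Lem:KerA-lambda}, so I may invoke that result immediately: it lies in $\Ker(A-\lambda I)$, hence contributes nothing to the first equation's right-hand side and, being of the form $(I-\lambda T)^{-1}\G\varphi$, should contribute precisely $\varphi$ under the boundary map $\Gd$. For the first term I would compute $(A-\lambda I)T(I-\lambda T)^{-1}f$ using the operator identity $(A-\lambda I)T = I - \lambda T$ on $\Ran(T)$, which follows from $AT=I$ together with the definition of $A$; this collapses the expression to $f$.

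The key algebraic step, which I expect to reuse throughout, is the identity already exploited in the proof of Lemma~\ref{Lem:KerA-lambda}, namely that $(A-\lambda I)(I-\lambda T)^{-1} = A$ on the appropriate domain. Applying $A-\lambda I$ to the full candidate $u_\lambda^{f,\varphi}$ and combining the two pieces should yield $f$ cleanly. For the boundary condition I would apply $\Gd$ to $u_\lambda^{f,\varphi}$: since $T(I-\lambda T)^{-1}f \in \Ran(T) = \Ker(\Gd)$, the first term is annihilated, leaving $\Gd\,[(I-\lambda T)^{-1}\G\varphi]$. Here I must be careful to expand $(I-\lambda T)^{-1}\G\varphi = \G\varphi + \lambda T(I-\lambda T)^{-1}\G\varphi$ and note that the second summand again lies in $\Ran(T)=\Ker(\Gd)$, so $\Gd$ returns $\gamma_0\G\varphi = \varphi$.

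For uniqueness I would argue that the difference of any two solutions $w$ satisfies $(A-\lambda I)w = 0$ and $\Gd w = 0$. By Lemma~\ref{Lem:KerA-lambda}, $w = (I-\lambda T)^{-1}\G\psi$ for some $\psi \in E$, and then the boundary condition $\Gd w = 0$ combined with the computation just described forces $\psi = 0$, whence $w=0$. The main obstacle, and the point deserving the most care, is the consistent use of the domain restrictions: the operators $\Ad$, $\gamma_0$, and the various identities involving $A$ are only valid on specific ranges, so each manipulation of $(A-\lambda I)$ must be justified on the correct subspace rather than treated as a global operator identity. Provided one tracks these domains exactly as in the preceding lemma, the verification is essentially a recombination of steps already carried out there, which makes Lemma~\ref{Lem:KerA-lambda} the true engine of the proof.
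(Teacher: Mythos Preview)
Your proposal is correct and follows essentially the same route as the paper: existence via Lemma~\ref{Lem:KerA-lambda} for the second summand plus the identity $(A-\lambda I)T = I-\lambda T$ for the first, and the boundary check via the expansion $(I-\lambda T)^{-1} = I + \lambda T(I-\lambda T)^{-1}$ together with $\Ker(\Gd)=\Ran(T)$. The only cosmetic differences are that the paper treats uniqueness first and argues it directly from the decomposition $u_0 = T f_0 + \G\varphi_0$ rather than invoking Lemma~\ref{Lem:KerA-lambda}, but your variant via the lemma is equally valid.
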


\begin{proof}
  Uniqueness of solution follows from the standard arguments.
 Namely,
 if $u_1, u_2 \in \Dom(A)$ are two solutions, then for their
 difference~$u_0 = u_1 - u_2 = A f_0 + \G \varphi_0$ with some $f_0 \in
 H$, $\varphi_0 \in E$ we have
$ (A -\lambda I)u_0 = 0 $ and $\Gd u_0 =0$.
 Since $\Ker(\Gd) = \Ran(T)$ and $\Gd \G = I$, the second identity
 gives~$\varphi_0 =0$.
 Then the first identity yields~$0 = (A -\lambda I)T f_0 = (I - \lambda
 T)f_0$ and the equality $f_0 =0 $ follows from the assumption~$\lambda^{-1} \in
 \rho(T)$.

Let us prove the representation~(\ref{Eq:Soultion}).
Due to Lemma~\ref{Lem:KerA-lambda} the term $(I - \lambda
  T)^{-1}\G \varphi$ belongs to $\Ker(A - \lambda I)$.
Thus
   we have
\[
  (A -\lambda I) u_\lambda^{f,g}  =
 (A -\lambda I) T(I - \lambda T)^{-1} f
 = ( A -\lambda)(I - \lambda T)^{-1} T f  = AT f = f
\]
The condition~$\Gd u = 0$ is fulfilled for~$u_\lambda^{f,\varphi}$
  defined by~(\ref{Eq:Soultion}) due to obvious calculations
 \[
  \Gd u_\lambda^{f,\varphi} = \Gd  (I - \lambda
  T)^{-1}\G \varphi =
  \Gd [I + \lambda T   (I - \lambda
  T)^{-1}]\G \varphi = \Gd \G \varphi = \varphi
 \]
where we used equality $\Ker(\Gd) = \Ran(T)$.
The proof is compete.
\end{proof}


\subsection{M-operator} Let
 $\Lambda$ be a linear operator in $E$ defined
 on the domain~$\Dom(\Lambda)\subset E$.
Introduce the linear map~$\Gn$
 on $\Dom(\Gn) = \Ran(T) \dot + \G\Dom(\Lambda) \subset \Dom(A)$
  with the range  $\Ran(\Gn)\subset E$ by
 \[
   \Gn : T f + \G \varphi \mapsto \G^* f + \Lambda \varphi,
   \qquad f \in H, \varphi \in \Dom(\Lambda)
 \]
Obviously,
\begin{equation}\label{Eq:GnTandGnG}
  \Gn T = \G^*, \qquad \Gn \G = \Lambda
\end{equation}
Note that~$\Gn T$ is bounded as an adjoint to the bounded operator.
In applications below $\Gn$ is realized as the ``second
 boundary operator'' complementary to~$\Gd$.

 \begin{defn}
   The M-operator is an operator-function~$M(\lambda)$
   of the spectral parameter~$\lambda$
   defined on $\Dom(\Lambda)$ for $\lambda^{-1} \in \rho(T)$
  by the equality
\[
  M(\lambda) \Gd u_\lambda = \Gn u_\lambda, \qquad
  u_\lambda \in \Ker(A -\lambda I) \cap \Dom(\Gn)
\]
 \end{defn}
To check correctness of this definition assume~$u_\lambda \in \Ker(A
- \lambda I) \cap \Dom(\Gn)$ and $\lambda^{-1} \in \rho (T)$.
Then according to  Lemma~\ref{Lem:KerA-lambda} $u_\lambda = (I -
\lambda T)^{-1} \G \varphi$ where $\Gd u_\lambda = \varphi$
 with some $\varphi \in \Dom(\Lambda)$.
Therefore  $\Gd u_\lambda =0 $ means $\varphi =0$, which in turn
implies $u_\lambda =0 $ and $\Gn u_\lambda =0$.

\begin{thm}\label{Th:MOperator}
 For $\lambda^{-1} \in \rho(T)$
\[
  M(\lambda)  =
  \Gn (I - \lambda T)^{-1} \G =
  \Lambda + \lambda \G^* (I - \lambda T)^{-1} \G,
  \qquad  \Dom(M(\lambda)) =
  \Dom(\Lambda)
\]
\end{thm}

\begin{proof}
 By Lemma~\ref{Lem:KerA-lambda}
 any $u_\lambda\in \Ker(A -\lambda I)$
 has the form  $u_\lambda = (I - \lambda T)^{-1}\G \varphi$ with some
 $\varphi \in E$.
Assuming~$\varphi \in \Dom(\Lambda)$ we have~$u_\lambda \in \Dom(\Gn
)$ and
\[
 \begin{aligned}
 \Gn u_\lambda  &= \Gn (I - \lambda T)^{-1}\G \varphi =
 \Gn [I + \lambda T (I - \lambda
 T)^{-1}]\G \varphi \\
 &= [ \Gn \G + \lambda \Gn T (I - \lambda
 T)^{-1}\G]\varphi
 \end{aligned}
\]
The statement follows from equalities $\Gd u_\lambda = \varphi$,
$\Gn T = \G^*$, and $\Gn \G = \Lambda$.
\end{proof}

\begin{cor}
Values of the analytic operator-function~$M(\lambda) - M(0)$,
$\lambda^{-1} \in \rho(T)$
 are bounded operators in $E$.
\end{cor}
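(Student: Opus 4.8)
The plan is to read the result directly off the closed-form expression for $M(\lambda)$ established in Theorem~\ref{Th:MOperator}, namely $M(\lambda) = \Lambda + \lambda\,\G^*(I - \lambda T)^{-1}\G$ with $\Dom(M(\lambda)) = \Dom(\Lambda)$. First I would evaluate this formula at $\lambda = 0$. Since $(I - 0\cdot T)^{-1} = I$ is bounded, the point $\lambda = 0$ belongs to the admissible set $\{\lambda : \lambda^{-1} \in \rho(T)\}$, and the scalar prefactor $\lambda$ annihilates the second summand, leaving $M(0) = \Lambda$ on $\Dom(\Lambda)$.

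Subtracting, on the common domain $\Dom(\Lambda)$ the possibly unbounded term $\Lambda$ cancels, and one is left with
\[
  M(\lambda) - M(0) = \lambda\,\G^*(I - \lambda T)^{-1}\G.
\]
The key observation is then that every factor on the right is a bounded operator defined everywhere on its domain: $\G : E \to H$ is bounded and everywhere defined by the standing notational convention, hence so is its adjoint $\G^* : H \to E$; the middle factor $(I - \lambda T)^{-1}$ is bounded on $H$ precisely by the hypothesis $\lambda^{-1} \in \rho(T)$; and $\lambda$ is a scalar. A product of bounded everywhere-defined operators is again bounded and everywhere defined, so the right-hand side extends from $\Dom(\Lambda)$ to a bounded operator on all of $E$, which is exactly the assertion.

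Analyticity requires no extra work: the resolvent $\lambda \mapsto (I - \lambda T)^{-1}$ is operator-norm analytic on $\{\lambda : \lambda^{-1} \in \rho(T)\}$ by the standard resolvent argument, and pre- and post-composing with the fixed bounded operators $\G^*$ and $\G$, together with multiplication by the analytic scalar $\lambda$, preserves analyticity. There is essentially no real obstacle here; the only point demanding care is the bookkeeping of domains: although $M(\lambda)$ and $M(0)$ are individually defined only on $\Dom(\Lambda)$, their difference is the restriction to $\Dom(\Lambda)$ of the everywhere-defined bounded operator $\lambda\,\G^*(I - \lambda T)^{-1}\G$, and it is the latter that the corollary identifies as a bounded operator in $E$.
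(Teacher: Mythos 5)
Your argument is correct and is exactly the route the paper intends: the corollary is stated without proof immediately after Theorem~\ref{Th:MOperator}, from whose formula $M(\lambda)=\Lambda+\lambda\,\G^*(I-\lambda T)^{-1}\G$ one reads off $M(0)=\Lambda$ and hence $M(\lambda)-M(0)=\lambda\,\G^*(I-\lambda T)^{-1}\G$, a product of everywhere-defined bounded operators. Your added remarks on domain bookkeeping and on analyticity of the resolvent are accurate and consistent with the paper's conventions.
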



\subsection{Boundary conditions}
Let $\bd$, $\bn$ be two linear operators, $\Dom(\bd) \supset
\Dom(\Lambda)$ and $\bn : E \to E$.
 Consider spectral boundary value problem
 for unknown~$u \in
\Dom(\Gn)\subset \Dom(A)$
 defined by
\begin{equation}\label{Eq:GeneralSystem}
 \left\{
 \begin{aligned}
  &&(A - \lambda I) u = f \\
  && ( \bd\Gd + \bn \Gn) u = \varphi
 \end{aligned}
 \right.
\end{equation} where $f\in H$, $\varphi \in E$ and $\lambda \in
\Complex$ is the spectral parameter.

\begin{thm}\label{Thm:SolvabilityB0B1}
  Assume $\lambda^{-1} \in \rho(T)$ is such that
   the equation
\begin{equation}\label{Eq:AuxEquation1}
 \left[\bd + \bn M(\lambda)\right]\psi = g
\end{equation}
 with unknown $\psi \in E$
 is uniquely solvable for any~$g \in E$.
 Then the boundary value problem~(\ref{Eq:GeneralSystem})
  has unique solution~$u_\lambda^{f,\varphi}\in \Dom(A)$
  given by
 \begin{equation}\label{Eq:GeneralSystemSolution}
 u_\lambda^{f,\varphi} = T(I - \lambda T)^{-1} f + (I - \lambda
 T)^{-1} \G \Psi_\lambda^{f,\varphi}
  \end{equation}
where $ \Psi_\lambda^{f,\varphi} \in E$ solves
(\ref{Eq:AuxEquation1}) with
 \begin{equation}\label{Eq:AuxEquation2}
g = \varphi - \bn \G^* (I - \lambda T)^{-1}f
 \end{equation}
\end{thm}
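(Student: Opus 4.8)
The plan is to reduce the system~(\ref{Eq:GeneralSystem}) to the single auxiliary equation~(\ref{Eq:AuxEquation1}) by exploiting the parametrization of solutions already furnished by Theorem~\ref{Th:BVPTheorem1}. The guiding observation is that the first equation $(A - \lambda I)u = f$ does not determine $u$ uniquely; rather, once the value $\psi := \Gd u \in E$ is prescribed, Theorem~\ref{Th:BVPTheorem1} pins $u$ down as $u = T(I - \lambda T)^{-1} f + (I - \lambda T)^{-1}\G\psi$. The role of the second boundary condition is then exactly to select the correct $\psi$, so the entire argument amounts to translating $(\bd\Gd + \bn\Gn)u = \varphi$ into a closed equation for $\psi$.

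First I would fix $\psi \in \Dom(\Lambda)$, set $u = T(I - \lambda T)^{-1} f + (I - \lambda T)^{-1}\G\psi$, and record that this $u$ automatically solves $(A - \lambda I)u = f$ and lies in $\Dom(\Gn)$: the term $T(I-\lambda T)^{-1}f$ belongs to $\Ran(T)$ while $\G\psi \in \G\Dom(\Lambda)$, so $u \in \Ran(T)\dot + \G\Dom(\Lambda) = \Dom(\Gn)$. Next I would compute the two boundary quantities separately. For $\Gd$ the identity $\Ker(\Gd) = \Ran(T)$ together with the calculation already carried out in Theorem~\ref{Th:BVPTheorem1} gives $\Gd u = \psi$. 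For $\Gn$ I would split $u$ along the direct sum: on $T(I-\lambda T)^{-1}f \in \Ran(T)$ the relation $\Gn T = \G^*$ from~(\ref{Eq:GnTandGnG}) produces $\G^*(I-\lambda T)^{-1}f$, whereas on $(I-\lambda T)^{-1}\G\psi \in \Ker(A-\lambda I)$ Theorem~\ref{Th:MOperator} produces $M(\lambda)\psi$. Hence $\Gn u = \G^*(I-\lambda T)^{-1}f + M(\lambda)\psi$.

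With these in hand, substituting into the boundary condition yields
\[
 (\bd\Gd + \bn\Gn)u = \left[\bd + \bn M(\lambda)\right]\psi + \bn\G^*(I-\lambda T)^{-1}f,
\]
so that $(\bd\Gd + \bn\Gn)u = \varphi$ is equivalent to~(\ref{Eq:AuxEquation1}) with the right-hand side~(\ref{Eq:AuxEquation2}). Existence then follows by taking $\psi = \Psi_\lambda^{f,\varphi}$, the solution guaranteed by the hypothesis, and reading off the formula~(\ref{Eq:GeneralSystemSolution}). For uniqueness I would run the standard difference argument: if $u$ solves the homogeneous problem, Lemma~\ref{Lem:KerA-lambda} writes $u = (I-\lambda T)^{-1}\G\psi$ for some $\psi$, the same computation gives $\left[\bd + \bn M(\lambda)\right]\psi = 0$, and the assumed unique solvability (with $g = 0$) forces $\psi = 0$, whence $u = 0$.

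The point requiring care is the domain bookkeeping rather than any analytic estimate: one must verify that $\psi$ ranges over $\Dom(\Lambda)$ so that simultaneously $u \in \Dom(\Gn)$ and the M-operator formula of Theorem~\ref{Th:MOperator} apply legitimately, and that the splitting of $\Gn u$ honors the direct-sum structure of $\Dom(\Gn)$. Once these compatibility conditions are confirmed the reduction to~(\ref{Eq:AuxEquation1}) is purely algebraic, and the conclusion is immediate from the solvability hypothesis.
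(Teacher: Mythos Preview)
Your proposal is correct and follows essentially the same route as the paper: both arguments use Theorem~\ref{Th:BVPTheorem1} to parametrize solutions of $(A-\lambda I)u=f$ by $\psi=\Gd u$, compute $\Gd u=\psi$ and $\Gn u=\G^*(I-\lambda T)^{-1}f+M(\lambda)\psi$ via (\ref{Eq:GnTandGnG}) and Theorem~\ref{Th:MOperator}, and thereby reduce the boundary condition to~(\ref{Eq:AuxEquation1}) with right-hand side~(\ref{Eq:AuxEquation2}). Your organization (first establishing the equivalence, then reading off existence and uniqueness) is arguably a bit cleaner than the paper's, and your closing remark on domain bookkeeping mirrors the paper's own caveat about relaxing the condition $\Psi_\lambda^{f,\varphi}\in\Dom(\Lambda)$.
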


\begin{rem}
Formally the left hand side of~(\ref{Eq:AuxEquation1})
 is meaningful only for~$\psi \in \Dom(M(z))= \Dom(\Lambda)$.
 However, the domain of $\bd + \bn M(z)$ can be wider than
  $\Dom(\Lambda)$, for example if the operator sum~$\bd+\bn \Lambda$
  is bounded.
Taking such possibilities into consideration  the general solution
 to~(\ref{Eq:AuxEquation1})
  is sought in the whole space~$E$.
\end{rem}


\begin{proof}
  Due  to Lemma~\ref{Lem:KerA-lambda} the
  second term in~(\ref{Eq:GeneralSystemSolution})
   belongs to $\Ker(A - \lambda I)$.
Therefore
\[
(A - \lambda I)u_\lambda^{f,\varphi}  = (A - \lambda I) T(I -
\lambda T)^{-1} f = f
\]
Thus the element~(\ref{Eq:GeneralSystemSolution}) solves
the first equation in~(\ref{Eq:GeneralSystem}).
Let us verify fulfillment of the second equation in~(\ref{Eq:GeneralSystem}).
To that end we need to calculate  $(\bd\Gd +
\bn\Gn)u_\lambda^{f,\varphi}$ where $u_\lambda^{f,\varphi}$ is
defined by~(\ref{Eq:GeneralSystemSolution}).
Assuming for the moment that~$\Psi_\lambda^{f,\varphi} \in
\Dom(\Lambda)$ so that $u_\lambda^{f,\varphi}\in\Dom(\Gn) $, we have
according to properties of~$\Gd$, $\Gn$ and
Theorem~\ref{Th:MOperator}
\[
 \begin{aligned}
 \Gd u_\lambda^{f,\varphi} & = \Gd [T(I - \lambda T)^{-1} f + (I - \lambda
 T)^{-1} \G \Psi_\lambda^{f,\varphi}] = \Psi_\lambda^{f,\varphi}
 \\
 \Gn u_\lambda^{f,\varphi} & = \Gn [T(I - \lambda T)^{-1} f + (I - \lambda
 T)^{-1} \G \Psi_\lambda^{f,\varphi}]
 = \G^* (I - \lambda T)^{-1} f + M(\lambda) \Psi_\lambda^{f,\varphi}
 \end{aligned}
\]
Since~$\Psi_\lambda^{f,\varphi}$ solves~(\ref{Eq:AuxEquation1}),
 (\ref{Eq:AuxEquation2}), we have
\[
 \begin{aligned}
(\bd \Gd + \bn \Gn )  u_\lambda^{f,\varphi} & =
 \bd \Psi_\lambda^{f,\varphi} + \bn
 [ \G^* (I - \lambda T)^{-1} f + M(\lambda) \Psi_\lambda^{f,\varphi}]
 \\
  & = (\bd + \bn M(\lambda)) \Psi_\lambda^{f,\varphi} +
 \bn \G^* (I - \lambda T)^{-1} f
 \\
 & = \varphi - \bn \G^* (I - \lambda T)^{-1}f  +  \bn \G^* (I - \lambda T)^{-1}
 f = \varphi
 \end{aligned}
\]
Now the condition~$u_\lambda^{f,\varphi} \in
 \Dom(\Gn)$ can be relaxed by treating the expression~$\bd \Gd + \bn \Gn$
 as an operator sum initially defined on $\Dom(\Gn)$
 and then extended to its maximal domain in $\Dom(A) \subset E$.
%


Calculations above show that
 $u_\lambda^{f,\varphi}$ solves  the
system~$(A - \lambda I)u =f$, $\Gd u = \Psi_\lambda^{f,\varphi}$.
According to the uniqueness part of Theorem~\ref{Th:BVPTheorem1}
this solution is unique if equalities $f = 0 $ and
$\Psi_\lambda^{f,\varphi} = 0$ imply $\varphi =0$.
In turn, this implication follows from the unique solvability of
(\ref{Eq:AuxEquation1}).
The proof is complete.
\end{proof}


\subsection{Operator node}
In this subsection we discuss connections of the spectral boundary
 value problems~(\ref{Eq:SBVP}), (\ref{Eq:GeneralSystem})
 to the theory of open systems
 thereby translating the setting of previous sections into
 alternative, in some sense more intuitive, terms.
 We refer the reader to the books~\cite{CoFr, CuZw, Part, Staff}
 for background information on the linear systems theory.


%
 The collection~$\{T, \G, \Lambda; H, E\}$
  of two Hilbert spaces and three operators
   introduced above
    defines the block operator matrix
     acting in the space~$H\oplus E$
 and often
 called \textit{the operator node}
 \begin{equation}\label{Eq:NodeM}
\mathfrak{M} =  \bigg(
\begin{array}{ccc}
T &  \G \\
\G^*  &   \Lambda
\end{array}
 \bigg)
 \end{equation}
 The node~$\mathfrak M$  is associated with an open
 system~$\widehat {\mathfrak M}$ defined as follows.
The state and the input-output
 spaces of the system~$\widehat {\mathfrak M}$
 are identified with  $H$, $E$
 respectively.
 The inner states of~$\widehat {\mathfrak M}$
  are realized as elements of~$H$
  and are governed by the equation~$(A - \lambda I)u = 0$.
Elements of $E$ represent external control and observation
 data sent to the input and read from the output
 of the system~$\widehat {\mathfrak M}$
  by the external control and observation processes.
 For $\lambda^{-1} \in \rho(T)$ and $\varphi \in E$ the control
 process is given as the input-state
 mapping~$\varphi \mapsto u_\lambda^\varphi = (I - \lambda T)^{-1}\G \varphi$.
The state-output mapping representing the observation process is
 defined as  $u_\lambda^\varphi \mapsto \Gn u_\lambda^\varphi$
  assuming $u_\lambda^\varphi \in \Dom(\Gn)$, or equivalently,
   $\varphi \in\Dom(\Lambda)$.
In this model
  the transfer
 function that maps
  inputs into outputs coincide with the
  M-operator~$M(\lambda) : \varphi \mapsto\Gn u_\lambda^\varphi $.
  The map~$\Lambda$
   is called the feedthrough operator.
The role of $\Lambda$ becomes clear if we note that for $\lambda
=0 $ the input-output mapping reduces to the correspondence~$\varphi
\mapsto \Lambda \varphi $ .
%


The condition~$(\bd\Gd + \bn \Gn) u = \varphi$
 can be interpreted as a description of
 the system obtained from~$\widehat {\mathfrak M}$
  by ``mixing'' its inputs and outputs into a new input
   defined by the operator sum~$\bd\Gd + \bn \Gn$.
 The second term represents
  a feedback procedure that sends the original output~$\Gn u_\lambda^\varphi$,
    modified along the way by the operator~$\bn$,
     back to the system's input.
 In a similar way,
   with a suitable choice of operators~$\alpha_0$, $\alpha_1$,
   the output can be redefined
   as the sum~$(\alpha_0\Gd + \alpha_1 \Gn)u_\lambda^\varphi$,
  where $(A - \lambda)u_\lambda^\varphi = 0 $ and $\Gd u_\lambda^\varphi =
  \varphi$ is the input of system~$\widehat{\mathfrak M}$.
Combination of these two ``mixing'' operations leads to the system
with the output $(\alpha_0\Gd + \alpha_1 \Gn)u_\lambda^\varphi$
where~$u_\lambda^\varphi \in \Ker(A - \lambda I)$ is the state
satisfying condition~$(\bd\Gd + \bn \Gn) u = \varphi$, and $\varphi$
is considered as the input.
 The resulting system~$\widehat{\mathfrak N}$
  is called the fractional linear transformation of
   $\widehat {\mathfrak M}$.
It is not difficult to see that the mapping
 \[
 N(\lambda) : (\bd + \bn
 M(\lambda)) \varphi \mapsto (\alpha_0 + \alpha_1 M(\lambda)) \varphi
 \]
 is the transfer function of $\widehat{\mathfrak N}$.
 Here~$\varphi \in \Dom(\Lambda)$ is regarded as a parameter.
 If $(\bd +\bn M(\lambda))$ is boundedly invertible, then~$N(\lambda)$
 can be written in the form of linear operator
\[
 N(\lambda) = (\alpha_0 + \alpha_1 M(\lambda)) (\bd + \bn
 M(\lambda))^{-1}
 \]
In general case when $(\bd +\bn M(\lambda))$
 is not invertible~$N(\lambda)$ is a multi-valued map,
 or in other terminology,
 a linear relation on the Hilbert
 space~$E \oplus E$.
 Trivial inputs satisfying~$(\bd + \bn
 M(\lambda))\varphi = 0$
 correspond to the inner states that
 always exist and
 produce non-trivial output regardless
 of the input applied to the system.

Expression for the feedthrough operator~$\Theta$
 of system~$\widehat{\mathfrak N}$
 is obtained by setting~$\lambda = 0$,
\[
 \Theta : (\bd + \bn
\Lambda) \varphi \mapsto (\alpha_0 + \alpha_1 \Lambda)\varphi
\]
Assuming $(\bd + \bn \Lambda)$ is invertible, $ \Theta  = (\alpha_0
+ \alpha_1 \Lambda) (\bd + \bn \Lambda)^{-1} $.
Existence of both factors as well as existence of their product
 here and in the formula for~$N(\lambda)$ above requires
  further justification, especially in cases where participating
   operators are unbounded.
 The detailed discussion of relevant issues
  in the setting of abstract boundary value problems
   can be found in~\cite{Ryz3}.
 A brief illustration of these concepts is given below
 for the case of Hilbert boundary value problem
  for analytic functions.

\section{Applications}

Let $\D \subset \Complex $
 be a bounded simply connected domain of the complex plane~$\Complex$ with
 smooth boundary~$\bD$.
 Let us define the main and boundary Hilbert spaces as
  $H = L^2(\D)$, $E = L^2(\bD)$.
It is well known that the inhomogeneous boundary value problem for
the Dirichlet Laplacian in $H$
\begin{equation}\label{Eq:Harmonic1}
 \Delta u = f,\qquad u|_\bD = 0
 \end{equation}
 is uniquely solvable for any~$f \in H$.
 Let $T : H \to H$ be the corresponding solution operator~$T : f \mapsto u$ acting
 in $L^2(\D)$.
 The range~$\Ran(T)$ consists of all functions from
 the Sobolev class~$W_2^2(D)$ vanishing on the boundary~\cite{Brow}.
 Therefore~$\Ran(T)$
 is dense in $L^2(\D)$.
Following the general schema,
 we define~$\G : L^2(\bD) \to L^2(\D)$ to be the solution
 operator for the problem
\begin{equation}\label{Eq:Harmonic2}
 \Delta u = 0,\qquad u|_\bD = \varphi
\end{equation}
where $\varphi \in L^2(\bD)$.
Clearly, $\G$ is the operator of harmonic continuation of functions
 defined on $\bD$ into the domain~$\D$.
It is an integral operator with the kernel
 expressed in terms of  Green's function of the domain~$\D$.
If $u^\varphi$ is a solution to~(\ref{Eq:Harmonic2}) corresponding
 to~$\varphi \in L^2(\bD)$, then the element~$\varphi$ is uniquely
 (in sense of $L^2(\bD)$)
  recovered from $u^\varphi$ by the boundary trace
   mapping~$\gamma_0 : u \mapsto u|_\bD$.
Thus $\gamma_0\G = I_E $.

The solution of homogeneous problem $\Delta u =0$ with
 condition~$u|_\bD = 0 $
  is trivial and therefore
   the equality~$\Ran(T) \cap \Ran(\G) = \{0\}$ holds.
Define the operator~$A$ as the Laplacian with the dense domain
 $\Dom(A) = \Ran(T) \dot + \Ran(\G)$
  and introduce~$\Gd$  on
  $\Dom(\Gd) = \Dom(A)$ as the trace operator~$\gamma_0$  extended
  as the null mapping to set~$\Ran (T) = \Dom(A)\setminus \Ran(\G)$.
Denote~$\Ad$ the restriction of~$A$ to $\Ran(T)$.
Trace properties of functions from the Sobolev class~$W_2^2(\D)$
 imply that $\Ad $
  is in fact the Dirichlet Laplacian on~$\Dom(\Ad) =
  \Ran (T)$ and $\Ad T = I$.

Let~$\Gn : u \mapsto \left. \frac{\partial u}{\partial n
}\right|_\bD$ be the trace of the outer normal derivative of $u\in
\Dom(A)$ defined on the dense set of sufficiently smooth functions
in the closure of $\D$.
The integral representation
 for $T = \Ad^{-1}$
 and application of the Fubini theorem
 show that  $\Gn T = \G^* : H \to E$,
 as prescribed in (\ref{Eq:GnTandGnG}), see~\cite{Ryz3}.
All components of the operator node~$\mathfrak M$
from~(\ref{Eq:NodeM})
 are now completely determined except for the parameter~$\Lambda$
  defined on domain~$\Dom(\Lambda) \subset L^2(\bD)$.
Below we give three definitions of~$\Lambda$ resulting in
 three boundary value problems for harmonic and analytic functions
 in $\D$.
 We are concerned with the equation~(\ref{Eq:Harmonic2})
 and for simplicity only the case
  $\lambda =0 $ of system~(\ref{Eq:GeneralSystem})
   is discussed.
  Results for the
   spectral problem with any $\lambda \in \Complex $
    easily follow from the abstract considerations of Section~1.
%


\subsection{Poincar\'e problem}
Definitions of operators~$\Gd$ and $\Gn$ given above
 suggest the ``natural'' choice of~$\Lambda$.
 Since $\Gn$ maps a smooth function defined in $\overline\D$
   to the trace of its normal derivative on $\bD$, and
   $\G$ is the operator of harmonic continuation,
   we have for smooth~$\varphi$
\[
   \Gn \G : \varphi \mapsto \left. \frac{\partial u^\varphi_0}{\partial n }\right|_\bD
\]
 where $u^\varphi_0$ is the solution to $A u = 0$
  satisfying boundary condition~$u|_\bD = \varphi$.
Operator~$\Omega := \Gn \G$ is called
 \textit{the Dirichlet-to-Neumann map}
 for the Laplacian~$\Delta$ in $\D$.
It is known that~$\Omega$ defined on the Sobolev class~$W^1_2(\bD)$ is
 selfadjoint in $L^2(\bD)$.
Let~$\Lambda = \Omega$ with the domain
 $\Dom(\Lambda) = W^1_2(\bD)$.
%
%


According to Theorem~\ref{Thm:SolvabilityB0B1},
 for two mappings
  $\bd :  W^{1}_2(\bD) \to L^2(\bD)
 $ and  $\bn : L^2(\bD) \to L^2(\bD)$,
and  $ g\in L^2(\bD)$
 the solvability of system
 \begin{equation}\label{Eq:SystemPoincare}
 A u =0, \qquad
 (\bd \Gd + \bn \Gn ) u = g
 \end{equation}
  is equivalent to the solvability of
 \begin{equation}\label{Eq:DNBoundaryCondition}
 (\bd + \bn \Omega) \varphi = g.
 \end{equation}
Let $\tilde{\bd}, \tilde{\bn}, \tilde\gamma $ be
 complex valued measurable functions on $\bD$.
Define operators~$\bd$ and $\bn$ by
 \begin{equation}\label{Eq:OperatorsBoB1}
\bd : \varphi \mapsto \tilde{\bd}  \frac{d \varphi }{ds} +
\tilde\gamma \varphi \qquad \qquad
 \bn : \varphi \mapsto \tilde{\bn}\varphi\qquad
 \end{equation}
where $\frac{d}{ds}$ is the operator of (generalized)
differentiation in $L^2(\bD)$.
For sufficiently smooth  $\varphi$  the harmonic function~$u^\varphi =
\G \varphi$
 is continuously differentiable in the closure~$\overline \D$ and
  the
 trace of its tangential derivative~$\frac{\partial u^\varphi}{\partial\tau}$  on
 the boundary~$\bD$ satisfies
\[
 \left. \frac{\partial u^\varphi}{\partial\tau}\right|_\bD = \frac{d\varphi}{ds}
 \]
Thus the boundary condition in (\ref{Eq:SystemPoincare})
 is meaningful at least for harmonic functions~$u \in L^2(\D)$ with boundary
  values from $ W_2^1(\bD)$.
Solvability of (\ref{Eq:SystemPoincare}) with the
choice~(\ref{Eq:OperatorsBoB1}) therefore is determined  by
 the solvability of
\[
 \left(  \tilde{\bd}\frac{d}{ds} +
   \tilde{\bn} \Omega
+ \tilde \gamma
   \right) \varphi = g, \qquad g \in L^2(\bD)
\]
for unknown  $\varphi \in W_2^1(\bD)$.
Since $\Omega \varphi = \left. \frac{\partial u^\varphi}{\partial
n}\right|_\bD$, this condition can be rendered as
\begin{equation}\label{Eq:ObliqueDerivative}
 \left. \tilde{\bd}\frac{\partial u}{\partial \tau}\right|_\bD +
 \left. \tilde{\bn} \frac{\partial u}{\partial n}\right|_\bD
 +  \left. \tilde \gamma  u\right|_\bD
= g, \qquad g \in L^2(\bD)
\end{equation}
for the unknown  $u$ harmonic in $\D$.
When $\tilde{\bd}$, $\tilde{\bn}$, $\tilde\gamma$, and $g$ are
 sufficiently regular and
 real
 valued, the problem~(\ref{Eq:ObliqueDerivative})
  reduces to the classical Poincar\'e's problem for harmonic
  functions~\cite{Mus}.
%


%
%

\subsection{Hilbert problem}
Hilbert problem in the domain~$\D$
 consists in seeking an analytic function~$w = u + iv $ defined in~$\D$
  with the real and imaginary parts $u$, $v$ satisfying following condition on the
  boundary~$\bD$
\begin{equation}\label{Eq:ClassicalHilbertProblem}
   a (s) u (s)+ b(s) v (s) = g(s),
\end{equation}
 with real valued functions $a$, $b$, and $g$.
For simplicity we consider the case when~$\D$ is the unit disc~$
\Disc  = \{ z \in \Complex \,\mid\, |z| < 1\}$ in the complex plane
with the boundary ~$\Torus = \{z\in \Complex \, \mid\, |z| = 1\}$.
In order to apply the general schema we need to recall some
properties of Hilbert transform~$\mathcal H$ acting
in~$L^2(\Torus)$, see~\cite{Gar, Koos}.
The operator~$\mathcal H$ is bounded in $L^2(\Torus)$ and
  for real~$\varphi \in L^2(\Torus)$ the
  function~$\varphi  + i \mathcal H  \varphi$ is boundary
  value of the function~$w = u + i v $ analytic in $\Disc$.
In other words, if $w  =u + i v $ is analytic in $\Disc$ with real
valued harmonic functions $u$, $v$ and such that the trace~$\varphi
 = \left. u\right|_\Torus$ is in $L^2(\Torus)$ ,
 then $\widetilde \varphi = \left. v\right|_\Torus$ is also in $L^2(\Torus)$
 and functions $\varphi$,  $\widetilde \varphi$
 are related by equality~$\widetilde \varphi = \mathcal H \varphi$.
The function~$\widetilde \varphi $ is called \textit{the harmonic
conjugate} of $\varphi$.

 Define $\Lambda$ to be the Hilbert transform, $\Lambda = \mathcal H$.
 Then the boundary condition~(\ref{Eq:GeneralSystem})
  results in the equation
$ ( \bd + \bn \mathcal H ) \varphi = g $ that
 can be rewritten as
 \begin{equation}\label{Eq:HilbertProblem}
  \bd \varphi +  \bn \widetilde\varphi = g
 \end{equation}
 Let $\bd : \varphi \mapsto a\varphi$,
 $\bd : \varphi \mapsto b\varphi$
be two multiplication
 operators by measurable functions~$a, b$ on $\Torus$.
Under additional assumption that  $a$, $b$, $\varphi $, and $g$ are
real valued,
 the condition~(\ref{Eq:HilbertProblem}) corresponds to the Hilbert
problem~(\ref{Eq:ClassicalHilbertProblem}) for unknown function~$w =
u + i v $ analytic in $\Disc$.
If $\varphi\in L^2(\Torus)$ solves the equation
 \begin{equation}\label{Eq:HilbertProblemAB}
  a (s) \varphi (s)+ b(s)(\mathcal H \varphi )(s) = g(s),
 \end{equation}
 for almost all $s\in \Torus$
 then the
 solution to (\ref{Eq:ClassicalHilbertProblem})
 is $w = u + i v $ with real
 and imaginary parts $u =  \G\varphi$
 and $v =  \G \mathcal H \varphi$.

In the language of open systems theory the
 equation~(\ref{Eq:HilbertProblem}) can be treated as redefined input of
 the system~$\widehat{\mathfrak M}$ corresponding to the operator
  node~(\ref{Eq:NodeM}) with $\Lambda = \mathcal H $.
As an example, consider the left hand side
of~(\ref{Eq:HilbertProblem})
 with $\bd = 1$, $\bn = i$
 as the input of the new system~$\widehat {\mathfrak N}$
  and with $\bd = 1$, $\bn = -i$ as
 the output of $\widehat {\mathfrak N}$.
 Then the feedthrough operator of $\widehat {\mathfrak N}$
 is the map
 \[
  \Theta : (I + i \mathcal H)\varphi \mapsto (I - i \mathcal H
  )\varphi, \qquad \varphi \in L^2(\Torus)
 \]
which can not be written in the form~$\Theta = (I - i \mathcal H)(I
+ i \mathcal H)^{-1}$ because $I + i \mathcal H$ is not boundedly
invertible.
 Property~$\mathcal H^2 = -I$ of the Hilbert transform
 yields $(I + i \mathcal H)(I - i \mathcal H) = 0$
 and therefore $\Ker(I + i \mathcal H)$ is not trivial.
 In fact, $\Ker(I + i \mathcal H) = \Ran(I - i \mathcal H)$.
 Thus the mapping $\Theta$ is the linear relation on
 the space~$L^2(\Torus)\oplus
 L^2(\Torus)$.
 However, its restriction to the set~$(I + i \mathcal H)\Re( L^2(\Torus))  \oplus \{0\}$
 where $\Re( L^2(\Torus)) $ is the set of all real valued functions from
 $L^2(\Torus)$,
 defines an operator~$\theta = (I - i \mathcal H) (I + i \mathcal H
  )^{-1}$.
 It maps boundary values of functions~$w = u + iv$
 analytic in $\Disc$ with $u|_\Torus \in \Re( L^2(\Torus))$
  to the boundary
 values of complex conjugate function~$\bar w = u - iv$.
 Note that the operator~$\theta$ is
 not linear
 over the field of complex numbers
 because
  $a \theta w \neq \theta aw$,
 for $w \in \Dom(\theta)$ and
 $a\in \Complex$ unless $a$ is a real number.

\subsection{Riemann problem}
The Riemann problem for analytic functions is another case that can
 be studied by means of the general theory of Section~1.
Let $\D$ be the simply connected bounded domain~$\D\subset \Complex$
with regular boundary~$\bD$
and  $B$, $g$ be  measurable
complex valued functions on $\bD$.
A pair of functions $\Phi^\pm$
 is a solution to the corresponding Riemann problem
 if $\Phi^+$ is analytic in $\D$, $\Phi^-$ is analytic
 in~$\Complex \setminus \overline\D$, non-tangential boundary
 values of~$\Phi^\pm$
 on the contour~$\bD$ exist almost everywhere, and
 \begin{equation}\label{Eq:RimannProblemPhiPhi}
 \Phi^+(s) - B(s) \Phi^-(s) = g(s), \qquad \textit{   a.e.  } s \in \bD
 \end{equation}
Note that
 all considerations carried out in the beginning of this section for
  the Laplacian remain fully applicable,
 as $\Phi^\pm$ are linear combinations of harmonic functions
 defined in their corresponding domains.
%


 Let
$\mathcal S$ be the Cauchy singular integral operator on the
contour~$\bD$ defined for $\varphi  \in L^2(\bD)$ by
\[
\mathcal S : \varphi \mapsto \Phi(s) = \frac 1 {\pi i}\int_\bD
\frac{\varphi(t) dt}{t - s } =
 \frac 1 {\pi i}  \lim_{\varepsilon \downarrow 0 }\int_{|t - s | >
 \varepsilon}
\frac{\varphi(t) dt}{t - s }
\]
Operator $\mathcal S$
 is bounded in $L^2(\bD)$.
For notational convenience denote $D^+ = D$ and $D^- = \Complex \setminus \overline\D$.
 Two functions
\[
 \Phi^\pm(z) = \frac 1 {\pi i}\int_\bD \frac{\varphi(t) dt}{t - z }, \qquad z \in D^\pm
\]
where $\varphi  \in L^2(\bD)$
 are analytic and possess non-tangential boundary values
 almost everywhere on the contour~$\bD$
 \[
 \lim_{ z\to s, \  z\in D^\pm } \Phi^\pm(z) = \Phi^\pm(s),\qquad   \textit{   a.e.  } s\in\bD
 \]
The Sokhotzki-Plemelj formulae~\cite{Beg, Ga, Mus}
\begin{equation}\label{Eq:SokhPlem}
 \Phi^+ (s)  =  \varphi(s) + \Phi(s),
 \qquad
 \Phi^- (s)  = -  \varphi(s) + \Phi(s)
 \qquad   \textit{   a.e.  } s\in\bD
\end{equation}
and boundedness of~$\mathcal S$
 show that $\Phi^\pm \in L^2(\bD)$.
%
%
%

Introduce two multiplication operators~$\bd : \varphi \mapsto
a(s)\varphi(s)$, $\bn : \varphi \mapsto b(s) \varphi (s)$, where $a,
b$ are measurable functions of $s\in \bD$.
Then the choice $\Lambda : \varphi \mapsto \mathcal S \varphi $
 and the
 boundary condition from~(\ref{Eq:GeneralSystem}) leads to the
 equation for unknown~$\varphi \in L^2(\bD)$
\begin{equation}\label{Eq:RiemannProblem}
a(s) \varphi(s) + b(s) (\mathcal S\varphi) (s)= g(s)
 \end{equation}
Put~$a =  A + B$, $b = A - B$ with some measurable functions $A, B$
defined on $\bD$.
Then for $\Phi = \mathcal S \varphi$
 the equation~(\ref{Eq:RiemannProblem}) takes the form
\[
 a \varphi + b \mathcal S \varphi  =
  (A + B)\varphi + (A - B)\mathcal S \varphi  =
 A\left(\varphi + \Phi \right) - B \left( -  \varphi + \Phi\right)
 \]
 Therefore due to (\ref{Eq:SokhPlem}) the
 equation~(\ref{Eq:RiemannProblem}) becomes
 %
\[
 A(s) \Phi^+(s) - B(s) \Phi^-(s) = g(s), \qquad s \in \bD
\]
%
%
%
For $A(s) =1$ we arrive at the  Riemann boundary value
problem~(\ref{Eq:RimannProblemPhiPhi}).
%


%

%
Other types of
  boundary value problems can be described by
 the
 equation~(\ref{Eq:RiemannProblem}) if
 we continue to treat $a$ and $b$ as linear operators.
 For example, let~$\tau: \varphi(s) \to \varphi(\alpha(s))$, $s \in \bD$
  be the composition operator where~$\alpha(s)$ is an arbitrary one-to-one mapping
  of the contour~$\bD$  onto itself with
   continuous derivative $\alpha'(s) \neq 0$.
 The choice~$a = A\tau + B$, $b = A\tau - B$ where
 two multiplication operators~$A$, $ B$ are as above,
 results
 in the so-called shifted Riemann boundary value problem with,
 see~\cite{Ga} for details. 
\[
 A(s) \Phi^+ [\alpha(s)] - B(s) \Phi^-(s) = g(s)
\]
Note in conclusion that
 the case $\lambda \neq 0$ of the general
  spectral problem~(\ref{Eq:GeneralSystem})
   appears to be irrelevant
   for the study of analytic functions
    in the paper's context.
 However, the spectral theory approach
  may prove beneficial in the study
   of boundary value problems for the first-order
    differential operators of complex analysis,
     most notably, Cauchy-Riemann and Beltrami
      operators on domains
       (see for example \cite{Beg} for their definitions).

\end{document}